\documentclass{ifacconf}

\usepackage{amsmath}
\usepackage{amssymb}
\usepackage{mathrsfs}
\usepackage{tikz,siunitx}
\usetikzlibrary{matrix,positioning,calc,bending}
\usetikzlibrary{decorations.pathmorphing}
\tikzset{snake it/.style={-stealth,
    decoration={snake, 
    amplitude = .4mm,
    segment length = 2mm,
    post length=0.9mm},decorate}}

\tikzstyle{spring}=[thick,decorate,decoration={zigzag,pre length=0.5cm,post
	length=0.5cm,segment length=5}]

\tikzstyle{bloco} = [draw, fill=white, rectangle, minimum height=1cm, minimum width=.5cm, text width=2cm, align = center]
\tikzstyle{rede} = [draw, fill=blue, rectangle, minimum height=1cm, minimum width=.5cm, text width=2cm, align = center]
\tikzstyle{sum} = [draw, fill=white, circle]
\tikzstyle{input} = [coordinate]
\tikzstyle{output} = [coordinate]
\tikzstyle{pinstyle} = [pin edge={to-,thin,black}]
\usetikzlibrary{fit, shapes, arrows}
\usetikzlibrary{arrows.meta}
\usetikzlibrary{calc,patterns,angles,quotes}
\usetikzlibrary{decorations.pathmorphing}
\usetikzlibrary {positioning}

\tikzstyle{startstop} = [rectangle, rounded corners, minimum width=3cm, minimum height=1cm,text centered, draw=black, fill=red!30]
\tikzstyle{io} = [trapezium, trapezium left angle=70, trapezium right angle=110, minimum width=3cm, minimum height=1cm, text centered, draw=black, fill=blue!30]
\tikzstyle{process} = [rectangle, minimum width=3cm, minimum height=1cm, text centered, draw=black, fill=orange!30]
\tikzstyle{decision} = [diamond, minimum width=3cm, minimum height=1cm, text centered, draw=black, fill=green!30]
\tikzstyle{arrow} = [thick,->,>=stealth]

\tikzset{
	block/.style = {draw, fill=blue!20, rectangle, minimum height=3em, minimum width=5em},
	block1/.style = {draw, fill=blue!20, rectangle, minimum height=5em, minimum width=5em},
	bloco/.style = {draw, rectangle, minimum height=1cm, minimum width=1cm},
	network/.style = {draw, dashed, fill=orange!20, rectangle, minimum height=6em, minimum width=3em},
	input/.style = {coordinate,node contents={}},
	output/.style = {coordinate,node contents={}},
	switch/.style = {minimum size=3em,
		path picture={  \draw
			([xshift=-3mm]  \ppbb.center)  -- ++ (45:6mm);
			\draw[shorten >=3mm,-]
			(\ppbb.west) edge (\ppbb.center)
			(\ppbb.east)  --  (\ppbb.center);
			\draw[<->]
			([yshift=-2mm] \ppbb.center) arc[start angle=-15, end angle=75, radius=6mm];   
		},
		label={[yshift=-2mm] above:#1},
		node contents={}},
}

\usepackage[inline]{enumitem}

\usepackage{graphicx}      
\usepackage{natbib}        

\newtheorem{theorem}{Theorem}
\newtheorem{lemma}{Lemma}  
\newtheorem{remark}{Remark} 
\newtheorem{corollary}{Corollary}

\begin{document}
\begin{frontmatter}

\title{Dynamic Event-Triggered Control of Discrete-Time Nonlinear Systems based on Difference-Algebraic Representations\thanksref{footnoteinfo}} 

\thanks[footnoteinfo]{This work was supported by the Brazilian agencies CNPq (Grant numbers: 407885/2023-4, 307758/2025-7; 308791/2025-8), CAPES, and FAPEAM.}

\author[PPGEE]{Vitoriano Medeiros Casas},
\author[IFSEMG]{Gabriela L\'{\i}gia Reis},
\author[UERJ]{Pedro Henrique Silva Coutinho},
\author[DEUFAM]{Iury Bessa},
\author[UEA]{Rodrigo Farias Ara\'ujo}

\address[PPGEE]{Graduate Program in Electrical Engineering, Federal University of Amazonas, Av. Gen. Rodrigo Oct\'avio, 6200, Manaus, AM, Brazil (e-mail: vitoriano.casas@ufam.edu.br)}
\address[IFSEMG]{Department of Technological Education, Federal Institute of Southeast Minas Gerais, R. Bernardo Mascarenhas, 1283, Juiz de Fora, MG, 36080-001, Brazil (e-mail: gabriela.reis@ifsudestemg.edu.br)}
\address[UERJ]{Department of Electronics and Telecommunication Engineering, State University of Rio de Janeiro, Av. S\~ao Francisco Xavier, 524, Rio de Janeiro, RJ, 20550-900, Brazil (e-mail: phcoutinho@eng.uerj.br)}
\address[DEUFAM]{Faculty of Electrical and Computer Engineering, Federal University of Amazonas, Av. Gen. Rodrigo Oct\'avio, 6200, Manaus, AM, Brazil (e-mail: iurybessa@ufam.edu.br)}
\address[UEA]{Department of Control and Automation Engineering, Amazonas State University, Av. Darcy Vargas, 1200, Parque 10 de Novembro, Manaus, AM, 69050-020, Brazil (e-mail: rfaraujo@uea.edu.br)}

\begin{abstract}                
This paper addresses the dynamic event-triggered control for a class of discrete-time nonlinear systems described by a difference-algebraic representation (DAR), using a gain-scheduled controller.  An outstanding aspect of the proposed method is the incorporation of information about the system's nonlinearities into the control law and the trigger function. The proposed event-triggered mechanism also incorporates information on the asynchronous terms induced by the event-based sampling. All these ingredients enable the derivation of a less conservative co-design condition for the simultaneous design of the gain-scheduled control law and the dynamic triggering mechanism to ensure the asymptotic stability of the closed-loop system. An estimate of the region of attraction of the origin of the closed-loop system is obtained to guarantee the closed-loop system's operation within the domain of validity of the DAR.  Then, an optimization problem is formulated to reduce the number of events and enlarge the estimated region of attraction. Finally, the effectiveness of the proposed condition is illustrated by a numerical example.
\end{abstract}

\begin{keyword}                           
Event-based control; Convex optimization; Difference-Algebraic Representation; Lyapunov methods; Control over networks.
\end{keyword}

\end{frontmatter}

\section{Introduction}
Event-triggered control (ETC) is a relevant strategy for providing efficient implementations of networked control systems (NCS). Unlike traditional time-triggered approaches, which perform periodic or aperiodic transmissions regardless of the system's state, ETC performs control updates only when certain conditions related to the dynamic behavior of the system are violated~\cite{Abdelrahim2018}.

The main motivation behind ETC lies in reducing the rate of transmitted data over the communication network, relieving congestion, and allowing the control scheme to operate effectively even in the presence of delays and packet losses in shared environments. The key component of ETC systems is the Event-Triggering Mechanism (ETM), which is responsible for monitoring the system and determines when a new data transmission is required. This mechanism evaluates the plant measurements and determines transmission times based on previously defined triggering conditions \cite{Abdelrahim2018}.

The ETM can be classified as static, adaptive, or dynamic. Dynamic ETC incorporates an internal variable whose growth rate depends on the system's state, helping to suppress unnecessary transmissions more efficiently \cite{Girard2025}. For discrete-time systems, an increasing number of works have addressed event-triggered control \cite{Hu2016,Xiao2021}. Dynamic mechanisms have been widely studied for discrete-time systems, particularly in linear settings, where they are known to extend inter-event intervals while preserving stability \cite{Girard2025,Abdelrahim2018}. \cite{Xu2023} proposes an observer-based dynamic event-triggered control strategy for discrete-time linear multi-agent systems. In \cite{Li2021}, adaptive control problems for unknown second-order nonlinear multi-agent systems are investigated. In \cite{Zuo2019,COUTINHO2021,CoutinhoP2022}, the co-design conditions for dynamic ETC of nonlinear systems are presented based on a gain-scheduling approach to quasi-LPV models. Despite these contributions to discrete-time systems, most studies focus on linear dynamics, and there remains a gap in the study of nonlinear systems.

Nonlinear control strategies can often improve closed-loop system performance. One common approach to handling nonlinear systems involves using polytopic inclusions, where the nonlinear dynamics are represented by a convex combination of linear vertex systems, with weighting functions that depend on parameters or scheduling variables evolving within a predefined convex region.
This representation allows the use of linear matrix inequalities (LMIs) constraints and convex optimization techniques to design controllers that guarantee stability within the polytopic region.

Among the various representations of nonlinear systems by means of polytopic inclusions, the differential-algebraic representation has attracted attention due to its capability to model a wide range of real-world nonlinear systems and phenomena \cite{Oliveira2013}. While standard polytopic models approximate nonlinear dynamics through convex combinations of linear systems, the differential-algebraic representation is capable of providing equivalent representations of rational systems through a set of algebraic and differential equations \cite{Reis2021}. 
In the discrete-time context, difference-algebraic representations (DAR) have been proposed by \cite{Coutinho2009}. These representations allow the use of a polytopic formulation for nonlinear systems and the development of LMI-based conditions derived from the application of Lyapunov theory.

Many works have employed DAR for controlling nonlinear systems. DAR-based formulations have been applied to estimate the region of attraction of nonlinear systems with saturating actuators \cite{Coutinho2010} and to address robust stability under uncertainties \cite{pereira_coutinho_2009}. In the context of discrete-time systems, \cite{Oliveira01052013} has proposed a condition for the stability of nonlinear systems subject to disturbances and actuator saturation. DAR has also been applied to the region of attraction (RoA) estimation
\cite{Oliveira01052013,Oliveira2012,Azizi02012018}. Few works have addressed the ETC of nonlinear systems using DAR. In \cite{Moreira201715307}, a method for static ETC design of continuous-time rational systems is proposed, employing an emulation-based approach. In a different approach, \cite{Moreira20202720} develops a co-design condition. More recently, the dynamic ETC has been studied in~\cite{reis2025dynamic}. However, it still lacks developments for ETC design based on DAR for discrete-time nonlinear systems, particularly concerning dynamic ETC, which is the problem pursued in this paper.

A central issue in applying ETC to nonlinear controllers, especially gain-scheduling schemes, is asynchronism, which induces a mismatch between the actual current state of the system and the last state received by the controller. To incorporate the effects of asynchronism in the polytopic representation, a cancellation-based approach is adopted in \cite{CoutinhoP2022,CoutinhoP2022110292,reis2025dynamic,PessimP2023,coutinho2025resilient}, where an additional term is introduced in the triggering function to mitigate asynchronism. In this context, this paper proposes the following: 
\begin{enumerate*}[label=(\Roman*)]
    \item A dynamic ETC for discrete-time nonlinear rational systems in DAR form that includes an additional term to deal with the asynchronism;
    \item Co-design conditions for a dynamic ETM and a nonlinear controller. The controller is designed to utilize nonlinearity information and is capable of asymptotically stabilizing the systems under polytopic inclusions;
    \item An optimization problem focused on minimizing the number of events while simultaneously expanding the region of attraction.
\end{enumerate*}

The remainder of this paper is structured as follows. The problem formulation is stated in Section~\ref{sec:problem_formulation}. Section~\ref{sec:main_results} develops the co-design methodology for the proposed ETC strategy at the same time that the estimate of the region of attraction is maximized. Section~\ref{sec:numerial_results} presents numerical experiments that illustrate the effectiveness of the proposed method and validate the co-design conditions. Concluding remarks are provided in Section~\ref{sec:conclusion}.

\textbf{Notation}. $\mathbb{N}$ is the set of positive integers, $\mathbb{N}_{\leq p}$ the set of positive integers less than or equal to $p \in \mathbb{N}, \mathbb{N}_0$ is the set of nonnegative integers, $\mathbb{B}=\{0,1\}$ is the Boolean domain, $\mathbb{R}^n$ is the $n$-dimensional Euclidean space, $\mathbb{R}^{n \times m}$ is the set of $n \times m$ real matrices, and $\mathbb{R}_{\geq 0}\left(\mathbb{R}_{>0}\right)$ is the set of nonnegative (positive) real numbers. For a symmetric block matrix, the symbol $\ast$ denotes the term deduced by symmetry, $\operatorname{diag}(\cdot)$ stands for a block-diagonal matrix, and $\operatorname{He}(M)$ represents $M+M^{\top}$. $\lambda_{\min }(M)\left(\lambda_{\max }(M)\right)$ denotes the minimum (maximum) eigenvalue of matrix $M$. $I$ and 0 denote the identity and null matrices of appropriate dimensions, respectively. Given a multi-index $\mathbf{i}=\left(i_1, \ldots, i_p\right) \in \mathbb{B}^p$, where $\mathbb{B}^p=\left\{\mathbf{i}: i_j \in \mathbb{B}, j \in \mathbb{N}_{\leq p}\right\}$, it is defined $\mathbb{B}^{p+}=\left\{\mathbf{i}: i_j \leq i_{j+1}, i_j \in \mathbb{B}, j \in \mathbb{N}_{p-1}\right\}$ and $\mathscr{P}(\mathbf{i})$ is the set of permutations of the entries of $\mathbf{i}$.

\section{Problem formulation} \label{sec:problem_formulation}

Next, we see a few subsections. Consider the diagram of the NCS with ETM shown in Fig.~\ref {fig:diagram_network}. In the considered setup, the plant is a nonlinear system whose state $x_k$, indexed by $ k \in \mathbb{N}_0$, is transmitted to the controller via a general-purpose communication network with transmission instants determined by the triggering policy of the ETM. The ETM generates a sequence of transmission instants $\{k_j\}_{j=0}^{\infty}$, where $k_j$ is the $j$-th transmission instant.
Then, the state information available to the controller, $\hat{x}_k$, is held constant, through a zero-order hold (ZOH) mechanism, until the next transmission instant.

\begin{figure}[!ht]
    \centering
    \scalebox{.8}{
    \begin{tikzpicture}[auto, node distance=2cm,>={Latex[flex]}]
    \node (model_inp)   [input] {};
    \node (zoh)         [bloco, right=.6cm of model_inp] {ZOH};
    \node (model)       [bloco, right=1cm of zoh] {{Plant}};
    \node (model_out)   [right=.6cm of model] {};
    \node (etm)         [bloco, below=1cm of zoh, xshift=1cm] at (zoh) {ETM};
    
    \node (switch_out)  [below=.8cm of etm] {};
    \node (switch_in)   [right=0.2cm of switch_out] {};
    \node (etm_out_switch) [below=0.3cm of etm] {};
    \node (network)     [bloco, dashed, fill=blue!20, left=0cm of model_inp, yshift=-1.5cm, minimum height=5cm] {Network};
    \node (controller)  [bloco, left=.6cm of network] {Controller};

    \node (switch_out2) [left=.1cm of switch_out] {};
    \node (etm_out_switch2) [below=0cm of etm] {};
    \node (etm_out_switch3) [left=0.1cm of etm_out_switch2] {};

    \node (network_left) [left=0cm of network] {};

    \path let \p1 = (model_inp), \p2 = (switch_out) in
        node (network_in) [] at (\x1,\y2) {};
    \path let \p1 = (network_left), \p2 = (switch_out) in
        node (network_out) [] at (\x1,\y2) {};
    \path let \p1 = (network_left), \p2 = (model_inp) in
        node (network_model_out) [] at (\x1, \y2) {};

    \draw [->] (model_inp.east) -- node[pos=.3] (y) {} (zoh.west);
    \draw [->] (zoh.east) -- node[pos=.4] (y) {$u_k$} (model.west);
    \draw [-] (model.east) -- node[pos=0.5] (y) {$x_k$} (model_out.center);
    \draw [->] (model_out.center) |- node[pos=0.5] (y) {} (etm.east);
    \draw [-] (model_out.center) |- node[pos=0] (y) {} (switch_in.center);
    \draw [-] (switch_in.center) -- node[pos=0] {} (etm_out_switch.center);
    \draw [->] (etm.south) -- node[pos=0] {} (etm_out_switch.center);
    
    \draw [->] (switch_out.center) -- node[pos=0] {$\hat{x}_k$} (network_in.center);
    \draw [->] (network_out.east) -| node[pos=0] {} (controller.south);
    \draw [->] (controller.north) |- node[pos=.8] {} (network_model_out.east);
    \end{tikzpicture}}
    \caption{Networked control system with ETC scheme.}
    \label{fig:diagram_network}
\end{figure}

Consider that the plant is described by a discrete-time nonlinear dynamical system as follows:
\begin{equation} \label{eq:affine_system}
    x_{k+1} = f(x_k) + g(x_k) u_k
\end{equation}
where $x_k \in \mathcal{D}_x \subseteq \mathbb{R}^n$ is the state vector of the system and $u_k \in \mathbb{R}^m$ is the control input. The functions $f : \mathbb{R}^n \to \mathbb{R}^n$, with $f (0) = 0$, and $g : \mathbb{R}^n \to \mathbb{R}^{n \times m}$, are assumed rational and well-defined on $\mathcal{D}_x$. Then,  the system~\eqref{eq:affine_system} can be reformulated as a DAR described by~\cite{Reis2021,Coutinho2010}:
\begin{equation} \label{eq:system_dar}
    \begin{aligned}
     x_{k+1} &= A_1(x_k) x_k + A_2(x_k) \pi_k + A_3(x_k) u_k,  \\
     0 &= \Omega_1(x_k) x_k + \Omega_2(x_k) \pi_k + \Omega_3(x_k) u_k, \\
    \end{aligned}
\end{equation}
where $\pi_k \in \mathbb{R}^p$ is the vector of nonlinearities. The matrix-valued functions $A_1(x) \in \mathbb{R}^{n \times n}, A_2(x) \in \mathbb{R}^{n \times p}$, $A_3(x) \in \mathbb{R}^{n \times m}, \Omega_1(x_k) \in \mathbb{R}^{p \times n}, \Omega_2(x) \in \mathbb{R}^{p \times p}$ and $\Omega_3(x) \in \mathbb{R}^{p \times m}$ are affine functions in $x$ and the matrix $\Omega_2(x)$ is square and invertible, for all $x \in \mathcal{D}_x$. The domain of validity of the DAR is given by the following compact polyhedral set:
\begin{equation}\label{eq:domain_of_validity}
    \mathcal{D}_x = \biggl\{ x_k \in \mathbb{R}^n : b_j^\top x_k \leq 1, j \in \mathbb{N}_{\leq h} \biggl\}
\end{equation}
where $b_j \in \mathbb{R}^n$, $j \in \mathbb{N}_{\leq h}$, define the hyperplanes.

Consider the vector $\pi_k$ written as the solution of the algebraic equation in~\eqref{eq:system_dar}:
\begin{equation} \label{eq:pi_omega_2}
    \pi_k = -\Omega_2^{-1}(x_k)\left[\Omega_1(x_k) x_k+\Omega_3(x_k) u_k\right].
\end{equation}
The equivalence between the nonlinear system~\eqref{eq:affine_system} and its DAR~\eqref{eq:system_dar} can be verified by substituting~\eqref{eq:pi_omega_2} in the dynamical equation of~\eqref{eq:system_dar}, which results in~\eqref{eq:affine_system}.

Provided that the matrix-valued functions of~\eqref{eq:system_dar} are affine in $x$ and $\mathcal{D}_x$ is a compact set,
each state-dependent element can be bounded by $z_j^0 \leq z_j (x) \leq z_j^1$, for all $j \in \mathbb{N}_{\leq r}$, where $r$ denotes the number of scheduling variables, $z_{j}(x) : \mathcal{D}_x \to \mathbb{R}$ denotes the $j$-th scheduling variable, and $z_j^0, z_j^1$ are its lower and upper bounds, respectively. Then, the DAR system~\eqref{eq:system_dar} can be equivalently represented as the following polytopic model: 
\begin{equation}\label{eq:polytopic_model}
\begin{aligned}
    x_{k+1} &= \sum_{\mathbf{i} \in \mathbb{B}^r} \alpha_{\mathbf{i}}(x_k) (A_{1\mathbf{i}}x_k + A_{2\mathbf{i}}\pi_k + A_{3\mathbf{i}} u_k) \\
    0 &= \sum_{\mathbf{i} \in \mathbb{B}^r} \alpha_{\mathbf{i}}(x_k) (\Omega_{1\mathbf{i}}x_k + \Omega_{2\mathbf{i}}\pi_k + \Omega_{3\mathbf{i}} u_k),
\end{aligned}
\end{equation}
where $\alpha_{\mathbf{i}}(x)$ is given by
\begin{equation}
    \alpha_{\mathbf{i}}(x_k) = \prod_{j=1}^r w_{i_j}^j (x_k), 
\end{equation}
where $\mathbf{i} = (i_1, ...,i_r) \in \mathbb{B}^r$ is an $r$-dimensional multi-index, and the weight functions are given by
\begin{equation}
    w_{j}^0(x_k) = \frac{z_j ^1 - z_j (x_k)}{z_j ^1 - z_j ^0},~ w_j^1(x_k) = 1 - w_j^0(x_k).
\end{equation}
Thus, the parameters satisfy the convex sum property
\begin{equation}
    \sum_{\mathbf{i}\in \mathbb{B}^r} \alpha_{\mathbf{i}}(x) = 1, \, \text{and} \, \alpha_{\mathbf{i}}(x) \geq 0, \quad \forall \mathbf{i} \in \mathbb{B}^r, \, \forall x \in \mathcal{D}_x.
\end{equation}

In this work, we employ the following event-triggered gain-scheduled nonlinear control law:
\begin{equation} \label{eq:control_law}
    u_k = K(\hat{x}_k) \hat{x}_k + L(\hat{x}_k) \hat{\pi}_k,
\end{equation}
where the state-dependent gains are parameterized as
\begin{equation} \label{eq:control_law_scheduled}
    \begin{aligned} 
    K(\hat{x}) &= \sum_{\mathbf{j} \in \mathbb{B}^r} \alpha_{\mathbf{j}}(\hat{x}) K_{\mathbf{j}}, \text{ and }
    L(\hat{x}) = \sum_{\mathbf{j} \in \mathbb{B}^r} \alpha_{\mathbf{j}}(\hat{x}) L_{\mathbf{j}},
    \end{aligned}
\end{equation}
with $L_{\mathbf{j}}\in \mathbb{R}^{m\times p}, K_{\mathbf{j}} \in \mathbb{R}^{m\times n}$, for all $\mathbf{j} \in \mathbb{B}^r$, being the gains to be designed. Moreover, due to the ZOH mechanism, the information on the state $\hat{x}_k$ and the vector of nonlinearities $\hat{\pi}_k$ available to the controller are
\begin{equation}
    \hat{x}_k=x_{k_j}, \quad \hat{\pi}_k = \pi_{k_j}, \quad \forall k \in \mathcal{I}_j,
\end{equation}
where $\mathcal{I}_j := \left\{k_j, \ldots, k_{j+1}-1\right\}$ denotes the discrete-time indices between two consecutive transmission instants $k_j$ and $k_{j+1}$. Then, the transmission errors induced by the event-based sampling are
\begin{equation} \label{eq:transmission_error}
    \begin{cases} e_k= \hat{x}_k-x_k,~\forall k \in \mathcal{I}_j, \\
    \delta_k= \hat{\pi}_k-\pi_k, ~\forall k \in \mathcal{I}_j. \end{cases}
\end{equation}
Consider the augmented vector $\zeta = \begin{bmatrix}
x^\top & \pi^\top & e^\top & \delta^\top \end{bmatrix}^\top$.
By substituting the control law~\eqref{eq:control_law} into the DAR~\eqref{eq:system_dar}, 
and using the transmission errors~\eqref{eq:transmission_error}, the following closed-loop system 
can be obtained\footnote{Hereafter, the discrete-time index is omitted for brevity unless 
explicitly required.}:
\begin{equation}\label{eq:dar_nl_law}
    \begin{aligned}
    x_{k+1} = x_+ & =\varphi_1 \left(\zeta \right)+\xi_1 \left(\zeta \right), \\
    0 & =\varphi_2\left(\zeta \right)+ \xi_2 \left(\zeta\right),
    \end{aligned}
\end{equation}
where
\begin{align*}
\varphi_1(\zeta) 
&= \left(A_1(x) + A_3(x) K(x)\right) x
   + A_3(x) K(x) e \notag \\
&\quad + \left(A_2(x) + A_3(x)L(x)\right) \pi
   + A_3(x) L(x) \delta, \\[0.5em]
\varphi_2(\zeta) 
&= \left(\Omega_1(x) + \Omega_3(x) K(x)\right) x
   + \Omega_3(x) K(x) e \notag \\
&\quad + \left(\Omega_2(x) + \Omega_3(x)L(x)\right) \pi
   + \Omega_3(x) L(x) \delta, \\[0.5em]
\xi_1(\zeta) 
&= A_3(x)\!\left\{(K(x{+}e)-K(x))(x{+}e) \right. \notag \\
&\quad \left. + (L(x{+}e)-L(x))(\delta + \pi) \right\}, \\[0.5em]
\xi_2(\zeta) 
&= \Omega_3(x)\!\left\{(K(x{+}e)-K(x))(x{+}e) \right. \notag \\
&\quad \left. + (L(x{+}e)-L(x))(\delta + \pi) \right\}.
\end{align*}

\begin{remark}
Different from the polytopic model of the system~\eqref{eq:polytopic_model}, whose state-dependent parameters depend on the plant state $x_k$, the state-dependent gains of the gain-scheduled nonlinear controller~\eqref{eq:control_law} depend on the sampled state $\hat{x}_k$. This leads to the so-called asynchronous phenomenon \cite{coutinho2025resilient,COUTINHO2021,CoutinhoP2022,reis2025dynamic}.  The terms induced by the asynchronous phenomenon in the closed-loop dynamics are $\xi_1(\zeta)$ and $\xi_2(\zeta)$. Thus, it is clear that this phenomenon induces an additional internal perturbation to the closed-loop system.
If this phenomenon is not properly handled, the gain-scheduling structure of the controller~\eqref{eq:control_law} degenerates to a linear control structure, that is, $L_{\mathbf{j}} \approx L$, $K_{\mathbf{j}} \approx K$, $\forall \mathbf{j} \in \mathbb{B}^{r}$. The solution to deal with this issue will be discussed in the next section.
\end{remark}

Based on the previous discussion, this work aims to address the co-design problem of determining a gain-scheduled nonlinear control law~\eqref{eq:control_law_scheduled} operating under a dynamic ETM for ensuring that the origin of the closed-loop system~\eqref{eq:dar_nl_law} is asymptotically stable.
Moreover, the proposed approach aims to reduce the number of transmissions generated by the ETM and 
estimate the region of attraction of the origin of~\eqref{eq:dar_nl_law} within the domain of validity $\mathcal{D}_x$ of the DAR model given in~\eqref{eq:domain_of_validity}.

\section{Main results} \label{sec:main_results}

The main contributions are presented in this section. First, the dynamic ETC scheme and an appropriate trigger function are proposed to compensate for the asynchronous phenomenon between the plant and the controller. Then, a co-design condition is presented to obtain the controller and the ETM parameters. Moreover, the region of attraction of the origin of the closed-loop system is estimated. Finally, an optimization problem is formulated to reduce the number of transmissions and enlarge the estimated region of attraction.

\subsection{Dynamic ETM}

To reduce the number of transmissions and the usage of communication resources, the transmission instants are determined by the following dynamic ETM:
\begin{equation}\label{eq:dynamic_etm_pi}
    k_{j+1}  = \min\{ k \in \mathbb{N} : k > k_j  \wedge \eta_k + \theta \Gamma(\zeta_k) < 0 \},
\end{equation}
for all $j \in \mathbb{N}$, where $k_0 = 0$, $\theta \in \mathbb{R}_{>0}$ is a given parameter, 
and $\eta_k \in \mathbb{R}_{\geq 0}$ is the internal dynamics of the ETM, which evolves according to
\begin{equation}\label{eq:dynamic_variable_eta_pi}
    \eta_{k+1} = (1-\lambda)\eta_k + \Gamma(\zeta_k), \quad \forall k \in \mathcal{I}_j,
\end{equation}
where $\lambda \in \mathbb{R}_{>0}$ is a parameter related to the decay rate of $\eta_k$. Note that, by construction, the Zeno phenomenon is avoided, since the minimum interval between events is given by a sampling period.

The dynamic ETM~\eqref{eq:dynamic_etm_pi}--\eqref{eq:dynamic_variable_eta_pi} has a trigger function described by
\begin{multline} \label{eq:trigger_function}
    \Gamma(\zeta)=x^{\top} Q_x x + \pi^{\top} Q_{\pi} \pi
    - e^{\top} Q_e e - \delta^\top Q_\delta \delta -\xi(\zeta),
\end{multline}
where
\begin{equation} \label{eq:async_term}
    \xi (\zeta) = 2 \varphi_1^T P \xi_1 + \xi_1^T P \xi_1 + 2\pi^T Z \xi_2,
\end{equation}
The term $\xi(\zeta)$ is to be designed to handle the asynchronous phenomenon and to allow the development of a suitable co-design condition. The term $x^{\top} Q_x x - e^{\top} Q_e e$ can be viewed as the deviation between the current state $x_k$ and the last transmitted state $\hat{x}_k$, while $\pi^{\top} Q_{\pi} \pi - \delta^\top Q_\delta \delta$ can be viewed as the deviation between the current vector of nonlinearities $\pi_k$ and the last transmitted vector of nonlinearities $\hat{\pi}_k$. Thus, the dynamic ETC scheme determines the appropriate transmission instants of the states in Fig.~\ref{fig:diagram_network}.

The next lemma establishes the positiveness of $\eta$ according to~\eqref{eq:dynamic_variable_eta_pi}. This property is required to construct a suitable Lyapunov function candidate to derive the co-design condition.
\begin{lemma} \label{lem:eta_lemma}
    Consider the dynamic ETC scheme~\eqref{eq:dynamic_etm_pi}--\eqref{eq:trigger_function} with initial condition $\eta_0 \geq 0$. If $\theta \geq 1/(1-\lambda)$, then $\eta_k~\geq~0$, $\forall k \in \mathbb{N}_0$.
\end{lemma}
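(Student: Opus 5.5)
We argue by induction on $k$, the base case $\eta_0 \geq 0$ being given. Assume $\eta_k \geq 0$ for some $k \in \mathbb{N}_0$; we show $\eta_{k+1} \geq 0$. The update~\eqref{eq:dynamic_variable_eta_pi} gives $\eta_{k+1} = (1-\lambda)\eta_k + \Gamma(\zeta_k)$. We distinguish two cases according to the sign of $\Gamma(\zeta_k)$.

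\emph{Case $\Gamma(\zeta_k) \geq 0$.} Here $\eta_{k+1} \geq (1-\lambda)\eta_k \geq 0$, since $1-\lambda > 0$. Positivity of $1-\lambda$ is implicit in the hypothesis $\theta \geq 1/(1-\lambda)$ with $\theta > 0$, because that hypothesis only makes sense for $\lambda \in (0,1)$.

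\emph{Case $\Gamma(\zeta_k) < 0$.} This is the substantive step. We use the property enforced by the triggering rule~\eqref{eq:dynamic_etm_pi}: along the evolution, $\eta_k + \theta\Gamma(\zeta_k) \geq 0$. Between events this holds because otherwise an event would have been generated at $k$. At an event instant $k = k_j$ the state is transmitted, so $e_k = 0$ and $\delta_k = 0$. Then $K(x+e) - K(x) = 0$ and $L(x+e)-L(x)=0$, which gives $\xi_1 = \xi_2 = 0$ and hence $\xi(\zeta_k) = 0$. Consequently $\Gamma(\zeta_k) = x^\top Q_x x + \pi^\top Q_\pi \pi \geq 0$, assuming $Q_x, Q_\pi \succeq 0$ as design matrices. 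So that instant falls under the first case, and the inequality $\eta_k + \theta\Gamma(\zeta_k) \geq 0$ can be taken to hold whenever $\Gamma(\zeta_k)<0$. From it we get $\Gamma(\zeta_k) \geq -\eta_k/\theta$. Therefore
\[
\eta_{k+1} \geq (1-\lambda)\eta_k - \frac{\eta_k}{\theta} = \Bigl(1-\lambda - \frac{1}{\theta}\Bigr)\eta_k \geq 0,
\]
where the last inequality uses $\theta \geq 1/(1-\lambda)$ together with $\eta_k \geq 0$.

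The main obstacle is making rigorous that $\eta_k + \theta\Gamma(\zeta_k) \geq 0$ at every $k$ where it is invoked, i.e.\ correctly interpreting the index convention of~\eqref{eq:dynamic_etm_pi}. Non-event instants satisfy it by the definition of $k_{j+1}$ as a minimum. Event instants are handled by the observation that the transmission errors reset to zero, so $\Gamma(\zeta_{k_j}) \geq 0$. This requires stating the standing assumption $Q_x, Q_\pi \succeq 0$, or equivalently noting that the ETM is evaluated with the freshly updated $\hat{x}$, as is standard in this literature. With that settled, the induction closes and $\eta_k \geq 0$ for all $k \in \mathbb{N}_0$.
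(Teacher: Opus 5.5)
Your proof is correct and follows the paper's argument. Both use the triggering rule to get $\eta_k+\theta\Gamma(\zeta_k)\ge 0$, hence $\eta_{k+1}\ge(1-\lambda-1/\theta)\eta_k\ge 0$, and then propagate this from $\eta_0\ge 0$ (the paper by iterating to $(1-\lambda-1/\theta)^k\eta_0$, you by induction). Your extra care at event instants — $e_{k_j}=\delta_{k_j}=0$ gives $\xi=0$ and $\Gamma(\zeta_{k_j})\ge 0$ — together with the explicit assumptions $\lambda<1$ and $Q_x,Q_\pi\succeq 0$, makes rigorous the step the paper only asserts holds ``by construction.''
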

\begin{pf}
    The proof follows from \cite[Lemma~1]{Hu2016}. By construction, it is not difficult to see that~\eqref{eq:dynamic_etm_pi} guarantees that $\eta_k + \theta \Gamma (\zeta) \geq 0,  \forall k \in \mathbb{N}$. Note that when $\theta = 0$, $\eta_k \geq 0$. If $\theta \neq 0$, and using~\eqref{eq:dynamic_variable_eta_pi}, then $\eta_{k+1} \geq \left( 1 - \frac{1}{\theta} - \lambda \right) \eta_k$. Using the last inequality iteratively and noting that $\eta_0 \geq 0$, it follows that $\eta_k \geq \left( 1 - \frac{1}{\theta} - \lambda \right)^{k} \eta_0 \geq 0$, $\forall k \in \mathbb{N}$. Provided that $\eta_0 \geq 0$, then $\eta_k \geq 0$ $\forall k \in \mathbb{N}_0$, under $\theta \geq 1/(1-\lambda)$. This completes the proof.~\hfill~$\blacksquare$
\end{pf}

\subsection{Co-design condition}

The condition to co-design the control gains and the parameters of the dynamic ETM are presented in Theorem~\ref{th:theorem_main}.
\begin{theorem} \label{th:theorem_main}
    Consider the discrete-time dynamical system in~\eqref{eq:affine_system} represented in a DAR form as in~\eqref{eq:system_dar}, the control law~\eqref{eq:control_law}, and the dynamic ETM described in~\eqref{eq:dynamic_etm_pi}--\eqref{eq:trigger_function}. Let $\eta_0, \theta \in \mathbb{R}_{\geq 0}, \lambda \in \mathbb{R}_{> 0}$, being given scalars satisfying $\theta > 1/(1-\lambda)$. If there exist matrices $\tilde L_{\mathbf{j}}\in \mathbb{R}^{m\times p}, \tilde K_{\mathbf{j}} \in \mathbb{R}^{m \times n}, \mathbf{j} \in \mathbb{B}^r, \tilde Z \in \mathbb{R}^{p \times p}$, and symmetric positive definite matrices $\tilde Q_x, \tilde Q_e,X \in \mathbb{R}^{n \times n}$ and $\tilde Q_{\pi}, \tilde Q_{\delta} \in \mathbb{R}^{p \times p}$, such that the following LMIs hold:
    \begin{equation} \label{eq:attraction_region_lmi_theorem1}
        \begin{bmatrix}
            1 & b_j^\top X \\
            \ast & X
        \end{bmatrix} \geq 0, \quad j \in \mathbb{N}_{\leq h}, 
    \end{equation}
    \begin{equation} \label{eq:lmi_sum_theorem1}
        \sum_{(\mathbf{i}, \mathbf{j})\in \mathscr{P}(\mathbf{m},\mathbf{n})} \Phi_{\mathbf{i} \mathbf{j}} < 0, \quad \forall \mathbf{m},\mathbf{n} \in \mathbb{B}^{r+},  
    \end{equation}
    with
    \begin{align*} 
        \Phi_{\mathbf{i} \mathbf{j}} &= 
        \begin{bmatrix}
        -X & 0 & \Phi_{1,3} & 0 & \Phi_{1,5} & X & 0 \\
        \ast & -\tilde Q_e & \tilde K_{\mathbf{j}}^\top \Omega_{3 \mathbf{i}}^\top & 0 & \tilde K_{\mathbf{j}}^\top A_{3 \mathbf{i}}^\top & 0 & 0\\
        \ast & \ast & \Phi_{3,3} & \Omega_{3 \mathbf{i}} \tilde L_{\mathbf{j}}^\top & \Phi_{3,5} & 0 & \tilde Z^\top \\
        \ast & \ast & \ast & - \tilde Q_{\delta} & \tilde L_{\mathbf{j}}^\top A_{3 \mathbf{i}}^\top & 0 & 0 \\
        \ast & \ast & \ast & \ast & -X & 0 & 0 \\
        \ast & \ast & \ast & \ast & \ast & -\tilde Q_x & 0 \\
        \ast & \ast & \ast & \ast & \ast & \ast & -\tilde Q_{\pi}
        \end{bmatrix}, \\
        \Phi_{1,3} &= X \Omega_{1 \mathbf{i}}^\top + \tilde K_{\mathbf{j}}^\top \Omega_{3 \mathbf{i}}^\top, \quad \Phi_{1,5} = X A_{1 \mathbf{i}}^\top + \tilde K_{\mathbf{j}}^\top A_{3 \mathbf{i}}^\top, \\
        \Phi_{3,3} &= \operatorname{He}(\Omega_{2 \mathbf{i}} \tilde Z + \Omega_{3 \mathbf{i}} \tilde L_{\mathbf{j}}), \quad
        \Phi_{3,5} = \tilde Z^\top A_{2 \mathbf{i}}^\top + \tilde L_{\mathbf{j}}^\top A_{3 \mathbf{i}}^\top,
    \end{align*}
    then, the origin of the closed-loop system~\eqref{eq:dar_nl_law} is asymptotically stable with the control gains given by $K_{\mathbf{j}} = \tilde K_{\mathbf{j}} X^{-1}, L_{\mathbf{j}}=\tilde L_{\mathbf{j}} \tilde Z^{-1}, \mathbf{j} \in \mathbb{B}^r$, and ETM parameters given by $Q_e = X^{-1}\tilde Q_e X^{-1}$, $Q_x= \tilde Q_x^{-1}$, $Q_{\delta} = \tilde Z^{-\top} \tilde Q_{\delta} \tilde Z^{-1}$, $Q_{\pi} = \tilde Q_{\pi}^{-1}$, $P = X^{-1}$, $Z = \tilde Z^{-\top}$. Moreover, the Lyapunov function that certifies the asymptotic stability of the origin is
    \begin{equation} \label{eq:lyapunov_candidate}
        W(x,\eta) = V(x) + \eta, \quad V(x) = x^\top P x,
    \end{equation}
    and, for a given $\eta_0$, the state trajectories $(x_k,\eta_k)$ with initial condition $x_0$ taken inside of
    \begin{equation} \label{eq:initial_states_region}
        \mathcal{R}_0 = \{x \in \mathbb{R}^{n} : V(x) \leq 1 -\eta_0, \; \eta_0 < 1 \},
    \end{equation}
    converge asymptotically to the origin $(x,\eta) = (0,0)$ without leaving the region
    \begin{equation}~\label{eq:level_set}
        \mathcal{R} = \{ x \in \mathbb{R}^n, \eta \in \mathbb{R}_{> 0} : W(x, \eta) \leq 1\},
    \end{equation}
    which, in its turn, satisfies $\mathcal{R} \subset \mathcal{D}_x \times \mathbb{R}_{\geq 0}$.
\end{theorem}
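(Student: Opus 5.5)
We will use the Lyapunov function $W(x,\eta)=V(x)+\eta$ in \eqref{eq:lyapunov_candidate}. The goal is to show that $\Delta W := W(x_+,\eta_+)-W(x,\eta)$ is negative for $(x,\eta)\neq(0,0)$ in $\mathcal{R}$. By Lemma~\ref{lem:eta_lemma}, which applies because $\theta>1/(1-\lambda)$, we have $\eta_k\ge 0$, so $W$ is positive definite in $(x,\eta)$.

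\textbf{Step 1: expand $\Delta W$.} Using \eqref{eq:dar_nl_law} and \eqref{eq:dynamic_variable_eta_pi},
\[
\Delta W = \varphi_1^\top P\varphi_1 + 2\varphi_1^\top P\xi_1+\xi_1^\top P\xi_1 - x^\top P x - \lambda\eta + \Gamma(\zeta).
\]
Substituting $\Gamma$ from \eqref{eq:trigger_function}, the asynchronous term $\xi(\zeta)$ in \eqref{eq:async_term} cancels $2\varphi_1^\top P\xi_1+\xi_1^\top P\xi_1$. It leaves a residual $-2\pi^\top Z\xi_2$. We handle it by adding the null term $2\pi^\top Z(\varphi_2+\xi_2)=0$, which holds by the algebraic equation in \eqref{eq:dar_nl_law}. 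After this addition, the $\xi_2$ contribution also cancels, and the result is
\[
\Delta W \le \varphi_1^\top P\varphi_1 - x^\top Px + \operatorname{He}(\pi^\top Z\varphi_2) + x^\top Q_x x+\pi^\top Q_\pi\pi - e^\top Q_e e-\delta^\top Q_\delta\delta - \lambda\eta .
\]
The right-hand side is a quadratic form in $\zeta$ whose matrix depends affinely on the gains and on the polytopic weights, with $\alpha_{\mathbf i}(x)$ for the plant and $\alpha_{\mathbf j}(x)$ for the controller. Note that the controller weights are evaluated at $x$ and not $\hat x$ inside $\varphi_1,\varphi_2$; this is precisely the purpose of the asynchronous decomposition.

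\textbf{Step 2: reach the LMI.} Apply a Schur complement to $\varphi_1^\top P\varphi_1$ with $P=X^{-1}$, and to $Q_x=\tilde Q_x^{-1}$ and $Q_\pi=\tilde Q_\pi^{-1}$. These three give the last three block rows and columns of $\Phi_{\mathbf{ij}}$. Then apply the congruence $\operatorname{diag}(X,X,\tilde Z,\tilde Z,I,I,I)$, acting on $(x,e,\pi,\delta)$, together with the change of variables $\tilde K_{\mathbf j}=K_{\mathbf j}X$, $\tilde L_{\mathbf j}=L_{\mathbf j}\tilde Z$ and $Z=\tilde Z^{-\top}$. The block $\operatorname{He}(\pi^\top Z(\Omega_2+\Omega_3L)\pi)$ becomes $\operatorname{He}(\Omega_{2\mathbf i}\tilde Z+\Omega_{3\mathbf i}\tilde L_{\mathbf j})$, up to transpose conventions, and $Q_e,Q_\delta$ map to $\tilde Q_e,\tilde Q_\delta$. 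This yields $\sum_{\mathbf i,\mathbf j}\alpha_{\mathbf i}\alpha_{\mathbf j}\Phi_{\mathbf{ij}}<0$. This double sum follows from \eqref{eq:lmi_sum_theorem1} by the standard relaxation that groups permutations of multi-indices. The strictness of the inequality then gives $\Delta W\le -\epsilon\|\zeta\|^2-\lambda\eta<0$ away from the origin.

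\textbf{Step 3: region of attraction.} From \eqref{eq:attraction_region_lmi_theorem1}, a Schur complement gives $b_j^\top X b_j\le 1$. This implies that the ellipsoid $\{x^\top Px\le 1\}$ lies inside $\mathcal D_x$. If $(x,\eta)\in\mathcal R$, then $V(x)\le 1-\eta\le 1$, so $\mathcal R\subset\mathcal D_x\times\mathbb R_{\ge0}$. Inside this set the polytopic description is valid, so $\Delta W<0$ holds there, and $\mathcal R$ is positively invariant. For $x_0\in\mathcal R_0$ we have $W(x_0,\eta_0)\le1$, so the trajectory stays in $\mathcal R$ and converges to the origin by Lyapunov's theorem.

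The main obstacle is Step 1. It requires checking carefully that $\xi(\zeta)$ exactly absorbs the asynchronous perturbations, and that the sign and transpose conventions of $Z$ and $\tilde Z$ reproduce $\Phi_{3,3}$ and $\Phi_{3,5}$. A related point is that the nonlinear change of variables for $Q_\delta$ must remain consistent with $\tilde Z$ being merely invertible. That invertibility follows from the $(3,3)$ block being negative definite.
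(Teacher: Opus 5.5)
Your proposal is correct and is essentially the paper's proof. It uses the same cancellation of $2\varphi_1^\top P\xi_1+\xi_1^\top P\xi_1$ by $\xi(\zeta)$ together with the null term $2\pi^\top Z(\varphi_2+\xi_2)=0$, and the same Schur complements, change of variables and congruence. You go from $\Delta W$ to the LMI via $\operatorname{diag}(X,X,\tilde Z,\tilde Z,I,I,I)$, whereas the paper goes from the LMI to $\Delta W$ via the inverse transformation. It also uses the same grouping of permuted multi-indices, and an equivalent use of \eqref{eq:attraction_region_lmi_theorem1} to get $\mathcal R\subset\mathcal D_x\times\mathbb R_{\ge 0}$ ($b_j^\top X b_j\le 1$ plus Cauchy--Schwarz instead of $1-2b_j^\top x+V(x)\ge 0$).

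One small correction. Negativity of the $(3,3)$ block $\operatorname{He}(\Omega_{2\mathbf i}\tilde Z+\Omega_{3\mathbf i}\tilde L_{\mathbf j})$ forces $\tilde Z$ to be invertible only when $\Omega_{3\mathbf i}=0$, as in the example. In general the paper simply assumes invertibility. This is harmless because \eqref{eq:lmi_sum_theorem1} is strict, so $\tilde Z$ can be perturbed to an invertible matrix.
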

\begin{pf}
        Considering the Lyapunov function candidate~\eqref {eq:lyapunov_candidate}, Lemma~\ref{lem:eta_lemma}, and $X$ is a symmetric positive definite matrix. Suppose that the inequalities in~\eqref{eq:lmi_sum_theorem1} hold. Then, it follows that
        \begin{equation} \label{eq:lmi_sum_theorem1_pf}
        \Phi (x) = \sum_{\mathbf{m} \in \mathbb{B}} \sum_{\mathbf{n} \in \mathbb{B}} \alpha_{\mathbf{m}}(x) \alpha_{\mathbf{n}}(x) \left( \sum_{(\mathbf{i}, \mathbf{j})\in \mathscr{P}(\mathbf{m},\mathbf{n})} \Phi_{\mathbf{i} \mathbf{j}} \right) < 0.  
        \end{equation}
        Multiplying~\eqref{eq:lmi_sum_theorem1_pf} by $\operatorname{diag}\{ X^{-1}, X^{-1}, \tilde Z^{-1}, $ $\tilde Z^{-1}, X^{-1}, I, I \}$ on the left and its transpose on the right, since $X > 0$ and $\tilde Z$ is invertible, therefore
        \begin{equation}
        \begin{aligned}
        \Phi_{\mathbf{i} \mathbf{j}}^1 =
            \begin{bmatrix}
            -X^{-1} & 0 & \Phi_{1,3}^1 & 0 & \Phi_{1,5}^1 & I & 0 \\
            \ast & \Phi_{2,2}^1 & \Phi_{2,3}^1 & 0 & \Phi_{2,5}^1 & 0 & 0 \\
            \ast & \ast & \Phi_{3,3}^1 & \Phi_{3,4}^1 & \Phi_{3,5}^1 & 0 & I \\
            \ast & \ast & \ast & \Phi_{4,4}^1 & \Phi_{4,5}^1 & 0 & 0 \\
            \ast & \ast & \ast & \ast & -X^{-1} & 0 & 0 \\
            \ast & \ast & \ast & \ast & \ast & - \tilde Q_x & 0 \\
            \ast & \ast & \ast & \ast & \ast & \ast & - \tilde Q_{\pi}
        \end{bmatrix} < 0,
        \end{aligned}
        \end{equation}
        with
        \begin{equation*}
            \begin{aligned}
                \Phi_{1,3}^1 &= (\Omega_{1 \mathbf{i}}^\top + X^{-1} \tilde K_{\mathbf{j}}^\top \Omega_{3 \mathbf{i}}^\top) \tilde Z^{-1}, \\
                \Phi_{1,5}^1 &= (A_{1 \mathbf{i}}^\top + X^{-1} \tilde K_{\mathbf{j}}^\top A_{3 \mathbf{i}}^\top) X^{-1},~
                \Phi_{2,2}^1 = -X^{-1} \tilde Q_e X^{-1}, \\
                \Phi_{2,3}^1 &= X^{-1} \tilde K_{\mathbf{j}}^\top \Omega_{3 \mathbf{i}}^\top \tilde Z^{-1},~
                \Phi_{2,5}^1 = X^{-1} \tilde K_{\mathbf{j}}^\top A_{3 \mathbf{i}}^\top X^{-1}, \\
                \Phi_{3,3}^1 &= \operatorname{He} (\tilde Z^{-\top} (\Omega_{2 \mathbf{i}} + \Omega_{3 \mathbf{i}} \tilde L_{\mathbf{j}} \tilde Z^{-1})), \\
                \Phi_{3,4}^1 &= \tilde Z^{-\top} \Omega_{3 \mathbf{i}} \tilde L_{\mathbf{j}} \tilde Z^{-1},~
                \Phi_{3,5}^1 = (A_{2 \mathbf{i}}^\top + \tilde Z^{-\top} \tilde L_{\mathbf{j}}^\top A_{3 \mathbf{i}}^\top) X^{-1}, \\
                \Phi_{4,4}^1 &= \tilde Z^{-\top} \tilde Q_{\delta} \tilde Z^{-1},~
                \Phi_{4,5}^1 = \tilde Z^{-\top} \tilde L_{\mathbf{j}}^\top A_{3 \mathbf{i}}^\top X^{-1}.
            \end{aligned}
        \end{equation*}
        Applying the Schur complement and the change of variables $P=X^{-1}$, $Z = \tilde Z^{-\top}$, $K = \tilde K_{j} X^{-1}$, $L_j = \tilde L_j \tilde Z^{-1}$, $Q_x = \tilde Q_x^{-1}$, $Q_e = X^{-1} \tilde Q_e X^{-1}$, $Q_{\delta} = \tilde Z^{-\top} \tilde Q_{\delta} \tilde Z^{-1}$, \mbox{$Q_{\pi} = \tilde Q_{\pi}^{-1}$}, $A_{K} = A_{1 \mathbf{i}} + A_{3 \mathbf{i}} K_{\mathbf{j}}$, $\Omega_{K} = \Omega_{1 \mathbf{i}} + \Omega_{3 \mathbf{i}} K_{\mathbf{j}}$, $A_{L} = A_{2\mathbf{i}} + A_{3\mathbf{i}} L_{\mathbf{j}}$ e $\Omega_{L} = \Omega_{2\mathbf{i}} + \Omega_{3\mathbf{i}} L_{\mathbf{j}}$, it results in
        \begin{align}\label{eq:thr1_pf_last_matrix}
            \begin{bmatrix}
                \Psi_{11} & \Psi_{12} \\
                \star & \Psi_{22}
            \end{bmatrix} < 0,
        \end{align}
        where        
        \begin{align*}
            \Psi_{11} &= 
            \begin{bmatrix}
                A_{K}^\top P A_{K} - P + Q_x & A_{K}^{\top} P A_3 K \\
        \ast &  K^{\top} A_3^{\top} P A_3 K - Q_e 
            \end{bmatrix}, \\
            \Psi_{12} &= 
            \begin{bmatrix}
                A_{K}^{\top} P A_{L} + \Omega_{K}^{\top} Z^{\top} & A_K^\top P A_3 L \\
          K^{\top} A_3^{\top} P A_{L} + K^{\top} \Omega_3^{\top} Z^{\top} & K^{\top} A_3^{\top} P A_3 L
            \end{bmatrix}, \\
            \Psi_{22} &= 
            \begin{bmatrix}
                \operatorname{He}(Z \Omega_{L}) + A_{L}^{\top} P A_{L} + Q_{\pi} & A_{L}^\top P A_3 L + Z \Omega_3 L \\
        \ast & L^\top A_3^\top P A_3 L - Q_{\delta}
            \end{bmatrix}.
        \end{align*}
        Multiplying~\eqref{eq:thr1_pf_last_matrix} by $\zeta^\top$ on the left and its transpose on the right, then adding the term $\xi (\zeta)$, given by~\eqref{eq:async_term}, on both sides of the inequality, it follows that
        \begin{multline}
            (\varphi_1 + \xi_1)^T P (\varphi_1 + \xi_1) - x^T P x \\
            + {2\pi^T Z (\varphi_2 + \xi_2)} + x^T Q_x x + \pi^{\top} Q_{\pi} \pi \\
            - e^T Q_e e - \delta^\top Q_\delta \delta - \xi(\zeta) < 0.
        \end{multline}
        Using the DAR in~\eqref{eq:dar_nl_law}, the trigger function~\eqref{eq:trigger_function}, and Lemma~\ref{lem:eta_lemma}, for some given $\eta_0, \theta \in \mathbb{R}_{\geq 0}$, $\lambda \in \mathbb{R}_{> 0}$, leads to
        \begin{equation}
            x_{+}^\top P x_{+} - x^\top P x - \lambda \eta_k + \Gamma(\zeta) < 0,
        \end{equation}
        or, equivalently, 
        \begin{equation}\label{eq:lyap_diff}
            \Delta W(x,\eta) := x_{+}^\top P x_{+} - x^\top P x + {\eta_{k+1} - \eta_k} < 0.
        \end{equation}
        The above inequality ensures $\Delta W(x,\eta) < 0$, $\forall(x,\eta) \neq (0,0)$, and \eqref{eq:lyapunov_candidate} is a Lyapunov function that certifies the asymptotic stability of the origin of the closed-loop system~\eqref{eq:dar_nl_law}. It then follows from~\eqref{eq:lyap_diff} that $W(x_k, \eta_k) \leq W(x_0,\eta_0) = V(x_0) + \eta_0$.
        Then, if $x_0 \in \mathcal{R}_0$, for some $\eta_0 \in [0,1)$, it follows that $W(x_k, \eta_k) \leq 1$, which means that $(x_k,\eta_k) \in \mathcal{R}$, $\forall k \in \mathbb{N}_0$.
        
        By applying a congruence transformation to~\eqref{eq:attraction_region_lmi_theorem1} with the matrix $\operatorname{diag}\{1, X^{-1}\}$, then multiplying the result by $\begin{bmatrix}
            1 & x^\top
        \end{bmatrix}$ on the right and its transpose on the left, it follows that $1 - 2b_j^\top x + V(x) \geq 0$, $\forall j \in \mathbb{N}_{\leq 0}$. By summing and subtracting $\eta$, yields
        $1 - 2b_j^\top x + W(x,\eta) - \eta \geq 0$, $\forall j \in \mathbb{N}_{\leq 0}$. Then, if $(x,\eta) \in \mathcal{R}$, it follows that $2 - 2b_j^\top x - \eta \geq 0$, $\forall j \in \mathbb{N}_{\leq 0}$, which implies from the fact that $\eta_k \geq 0$, $\forall k \in \mathbb{N}_0$, that $b_j^\top x \leq 1$, $\forall j \in \mathbb{N}_{\leq 0}$, and $x \in \mathcal{D}_x$. This implies that $\mathcal{R} \subset \mathcal{D}_x \times \mathbb{R}_{\geq 0}$. 
        
        Thus, if $x_0 \in \mathcal{R}_0$ for some $\eta_0 \in [0,1)$, then $(x_k, \eta_k)$ converge asymptotically to the origin $(x,\eta) = (0,0)$, without leaving the region $\mathcal{R} \subset \mathcal{D}_x \times \mathbb{R}_{\geq 0}$, which ensures that the closed-loop trajectories do not leave the domain of validity of the DAR, $\mathcal{D}_x$, given in~\eqref{eq:domain_of_validity}. This concludes the proof.~\hfill~$\blacksquare$
\end{pf}

\begin{remark}
    An outstanding feature of the methodology proposed to establish Theorem~\ref{th:theorem_main} is the incorporation of the vector of nonlinearities of the DAR~\eqref{eq:system_dar} into the gain-scheduled control law~\eqref{eq:control_law} and the trigger function~\eqref{eq:trigger_function}.
    While in \cite{reis2025dynamic}, in addition to dealing the continuous-time case, a control law independent of the vector of nonlinearities is designed.
\end{remark}

In the sequel, similarly to \cite{reis2025dynamic}, we derive another co-design condition to design the following gain-scheduled controller
\begin{align}\label{eq:control_law_simplified}
    u_k = K(\hat{x}_k) \hat{x}_k,
\end{align}
which is a particular case of~\eqref{eq:control_law} without the nonlinearity term, that is, $L(\hat{x}_k)=0$, and the following trigger function:
\begin{align}\label{eq:trigger_function_simplified}
    \Gamma(x, e, \pi)=x^{\top} Q_x x - e^{\top} Q_e e -\xi(x, e, \pi),
\end{align}
where $\xi(x, e, \pi)$ is given as in~\eqref{eq:async_term} taking $L(\hat{x})=0$.
The co-design condition is established in the following Corollary.

\begin{corollary} \label{cor:pi_independent_lmis}
    Consider the discrete-time dynamical system in~\eqref{eq:affine_system} represented in a DAR form as in~\eqref{eq:system_dar}, the control law~\eqref{eq:control_law_simplified}, and the dynamic ETM described by~\eqref{eq:dynamic_etm_pi}, \eqref{eq:dynamic_variable_eta_pi}, \eqref{eq:control_law_simplified}.
    Under the conditions of Theorem~\ref{th:theorem_main}, if \eqref{eq:attraction_region_lmi_theorem1} and the following LMIs hold:
    \begin{equation} \label{eq:lmi_sum_theorem1_corollary}
        \sum_{(\mathbf{i}, \mathbf{j})\in \mathscr{P}(\mathbf{m},\mathbf{n})} \Phi_{\mathbf{i} \mathbf{j}} < 0, \quad \forall \mathbf{m},\mathbf{n} \in \mathbb{B}^{r+},  
    \end{equation}
    with
    \begin{align*} 
        \Phi_{\mathbf{i} \mathbf{j}} &= 
        \begin{bmatrix}
        -X & 0 & \Phi_{1,3} & \Phi_{1,4} & X \\
        \star & - \tilde Q_e & \tilde K_{\mathbf{j}}^{\top} \Omega_{3 \mathbf{i}}^{\top} & \tilde K_{\mathbf{j}}^{\top} A_{3 \mathbf{i}}^{\top} & 0 \\
        \star & \star & \operatorname{He}(\Omega_{2 \mathbf{i}} \tilde{Z}) & \tilde Z^{\top} A_{2 \mathbf{i}}^{\top} & 0 \\
        \star & \star & \star & -X & 0 \\
        \star & \star & \star & \star & -\tilde Q_x
        \end{bmatrix}, \\
        \Phi_{1,3} &= X \Omega_{1 \mathbf{i}}^\top + \tilde K_{\mathbf{j}}^\top \Omega_{3 \mathbf{i}}^\top, \quad
        \Phi_{1,4} = X A_{1 \mathbf{i}}^\top + \tilde K_{\mathbf{j}}^\top A_{3 \mathbf{i}}^\top.
    \end{align*}
    then, the origin of the closed-loop system~\eqref{eq:dar_nl_law} with \mbox{$L(x) = 0$} is asymptotically stable. Moreover, the Lyapunov function that certifies the asymptotic stability of the origin is given by~\eqref{eq:lyapunov_candidate}
    and, for a given $\eta_0$, the state trajectories $(x_k,\eta_k)$ with initial condition $x_0$ taken inside of~\eqref{eq:initial_states_region}
    converge asymptotically to the origin $(x,\eta) = (0,0)$ without leaving the region~$\mathcal{R}$ in~\eqref{eq:level_set}, which, in its turn, satisfies $\mathcal{R} \subset \mathcal{D}_x \times \mathbb{R}_{\geq 0}$.
\end{corollary}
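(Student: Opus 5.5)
The plan is to follow the proof of Theorem~\ref{th:theorem_main}, specialised to $L(\hat x)=0$ and with the $\delta$-channel and $Q_\pi$ removed. With $L\equiv 0$ the closed loop~\eqref{eq:dar_nl_law} reduces to
\begin{align*}
\varphi_1 &= (A_1+A_3K)x + A_3Ke + A_2\pi, \\
\varphi_2 &= (\Omega_1+\Omega_3K)x + \Omega_3Ke + \Omega_2\pi,
\end{align*}
with $\xi_1 = A_3(K(x+e)-K(x))(x+e)$ and $\xi_2=\Omega_3(K(x+e)-K(x))(x+e)$. The reduced augmented vector is $\bar\zeta = [x^\top\ e^\top\ \pi^\top]^\top$, so $\delta$ no longer plays a role. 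Lemma~\ref{lem:eta_lemma} again gives $\eta_k\geq 0$, since its proof only uses the ETM rule~\eqref{eq:dynamic_etm_pi} and the update~\eqref{eq:dynamic_variable_eta_pi}, now with $\Gamma$ as in~\eqref{eq:trigger_function_simplified}.

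\textbf{Step 1 (from vertex LMIs to a scheduled inequality).} Using the relaxation over $\mathscr{P}(\mathbf{m},\mathbf{n})$, \eqref{eq:lmi_sum_theorem1_corollary} implies $\sum_{\mathbf{i}}\sum_{\mathbf{j}}\alpha_{\mathbf{i}}(x)\alpha_{\mathbf{j}}(x)\Phi_{\mathbf{i}\mathbf{j}}<0$.

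\textbf{Step 2 (congruence and change of variables).} Apply a congruence with $\operatorname{diag}\{X^{-1},X^{-1},\tilde Z^{-1},X^{-1},I\}$ and set $P=X^{-1}$, $Z=\tilde Z^{-\top}$, $K_{\mathbf{j}}=\tilde K_{\mathbf{j}}X^{-1}$, $Q_e=X^{-1}\tilde Q_eX^{-1}$, $Q_x=\tilde Q_x^{-1}$. Taking Schur complements on the $-X^{-1}$ block (row 4) and the $-\tilde Q_x$ block (row 5) yields
\begin{equation*}
\begin{bmatrix}
A_K^\top PA_K-P+Q_x & A_K^\top PA_3K & A_K^\top PA_2+\Omega_K^\top Z^\top\\
\ast & K^\top A_3^\top PA_3K-Q_e & K^\top A_3^\top PA_2+K^\top\Omega_3^\top Z^\top\\
\ast&\ast&\operatorname{He}(Z\Omega_2)+A_2^\top PA_2
\end{bmatrix}<0 .
\end{equation*}
This is the matrix $[\Psi]$ of the theorem's proof with the $\delta$ row/column and $Q_\pi$ deleted. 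The $(3,3)$ entry is $\tilde Z^{-\top}\operatorname{He}(\Omega_2\tilde Z)\tilde Z^{-1}=\operatorname{He}(Z\Omega_2)$.

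\textbf{Step 3 (quadratic form and Lyapunov decrease).} Multiply by $\bar\zeta^\top$ and $\bar\zeta$. This gives
\begin{equation*}
\varphi_1^\top P\varphi_1 - x^\top Px + 2\pi^\top Z\varphi_2 + x^\top Q_xx - e^\top Q_ee<0 .
\end{equation*}
Add and subtract $\xi$ from~\eqref{eq:async_term} with $L=0$, and use $\varphi_2+\xi_2=0$ (the algebraic constraint of the closed loop). This yields $x_+^\top Px_+-x^\top Px+\Gamma<0$. Since $\eta_k\ge0$, this gives $x_+^\top Px_+-x^\top Px+\Gamma-\lambda\eta<0$, i.e.\ $\Delta W<0$ by~\eqref{eq:dynamic_variable_eta_pi}.

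\textbf{Step 4 (invariance and domain of validity).} The argument is verbatim from the theorem. Monotonicity of $W$ gives invariance of $\mathcal R$ for $x_0\in\mathcal R_0$. The congruence of~\eqref{eq:attraction_region_lmi_theorem1} with $\operatorname{diag}\{1,X^{-1}\}$ gives $b_j^\top x\le1$ on $\mathcal R$, hence $\mathcal R\subset\mathcal D_x\times\mathbb R_{\ge0}$.

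\textbf{Main obstacle.} I expect the delicate points to be two. The first is bookkeeping in Step 2: the Schur complement must reproduce exactly the cross terms $A_K^\top PA_2+\Omega_K^\top Z^\top$ and $K^\top\Omega_3^\top Z^\top$, so the block ordering must match $\bar\zeta$. The second is ensuring that the gain scheduled on $x$ (via $\alpha(x)$) versus on $\hat x$ is fully absorbed by $\xi$, which is exactly what the term $\xi$ in $\Gamma$ is designed to cancel.
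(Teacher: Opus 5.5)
Your proposal is correct and takes the same route as the paper, whose proof of the corollary only says it follows the steps of Theorem~\ref{th:theorem_main}; you carried out those steps with $L\equiv 0$ and the $\delta$-channel and $\tilde Q_\pi$ removed, and your congruence, Schur complements, and identification of the quadratic form as $\varphi_1^\top P\varphi_1 + 2\pi^\top Z\varphi_2$ all check out. One small detail to add: the invertibility of $\tilde Z$ that your congruence needs follows here from the diagonal block $\operatorname{He}(\Omega_2(x)\tilde Z)<0$ of the scheduled LMI, since $\tilde Z v=0$ with $v\neq 0$ would give $v^\top\operatorname{He}(\Omega_2(x)\tilde Z)v=0$.
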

\begin{pf}
    The proof follows the same steps as in Theorem~\ref{th:theorem_main}.~\hfill~$\blacksquare$
\end{pf}

\subsection{Optimization problems}

Sufficient co-design conditions for the event-based stabilization of the closed-loop system~\eqref{eq:dar_nl_law} were established in the previous section. This section presents methods to increase the inter-transmission intervals and enlarge the region of attraction estimation of the origin of the closed-loop system. 

To accomplish the objective of increasing the inter-transmission intervals, we
follow a similar approach as in~\cite{CoutinhoP2022110292,Moreira20202720}. 
According to the dynamic ETC scheme~\eqref{eq:dynamic_etm_pi}--\eqref{eq:trigger_function}, a new transmission occurs when 
\begin{equation}
   \mathcal{G}(\zeta_k) > 1 + \frac{\eta_k}{\theta \mathcal{D}(x_k,\pi_k)} - \mathcal{V}(\zeta_k),
\end{equation}
where 
\begin{align}
    \mathcal{G}(\zeta) &= \frac{e^\top Q_e e + \delta^\top Q_{\delta} \delta}{\mathcal{D}(x,\pi)}, \\
    \mathcal{V}(\zeta) &= \frac{\xi(\zeta)}{\mathcal{D}(x,\pi)}, \\
    \mathcal{D}(x,\pi) &= x^\top Q_x x + \pi^\top Q_{\pi} \pi.
\end{align}
Note that $\mathcal{G}(\zeta_k) = \mathcal{V}(\zeta_k)= 0$ at the transmission instants $k = k_j$. Alternatively, transmissions are not triggered while
\begin{equation}
    \mathcal{G}(\zeta_k) \leq \frac{\lambda_{\max}(Q_e) {\| e_k \|}^2 + \lambda_{\max}(Q_{\delta}) {\| \delta_k \|}^2}{\lambda_{\min}(Q_x) {\| x_k \|}^2 + \lambda_{\min}(Q_{\pi}) {\| \pi_k \|}^2},
\end{equation}
that can be recast in the form
\begin{equation}
    \mathcal{G}(\zeta_k) \leq \Lambda \frac{{\| e_k \|}^2 + {\| \delta_k \|}^2} {{\| x_k \|}^2 + {\| \pi_k \|}^2},
\end{equation}
where 
\begin{equation}
    \Lambda = \frac{\max \{\lambda_{\max}(Q_e),\lambda_{\max}(Q_{\delta}) \}}{\min \{ \lambda_{\min}(Q_x),\lambda_{\min}(Q_{\pi}) \}}.
\end{equation}

As a result, additional transmissions are not triggered unless
\begin{equation}
    \frac{{{\| e_k \|^2 + \| \pi_k \|^2}}} {{{\| x_k \|^2 + \| \delta_k \|^2}}} \leq \frac{1}{{\Lambda}} \left({1  + \frac{\eta_k}{\theta \mathcal{D}(x_k,\pi_k)} - \mathcal{V}(\zeta_k)}\right).
\end{equation}
Then, minimizing $\Lambda$ tends to increase the minimum time for $\mathcal{G}(x,e)$ to evolve from $0$ to $1 + \eta/(\theta D(x,\pi)) - \mathcal{V}(\zeta)$. This objective can be achieved by maximizing $\min \{ \lambda_{\min}(Q_x),\lambda_{\min}(Q_{\pi}) \}$ and minimizing $\max \{\lambda_{\max}(Q_e), \allowbreak \lambda_{\max}(Q_{\delta}) \}$.

To achieve the second objective, recall that the set of admissible initial states is given by $\mathcal{R}_0$ in~\eqref{eq:initial_states_region}. 
If the conditions in Theorem~\ref{th:theorem_main} are fulfilled, for $x_0 \in \mathcal{R}_0$, we ensure that the trajectories of the closed-loop system~\eqref{eq:dar_nl_law} asymptotically converge to the origin without leaving the region~$\mathcal{R} \subset \mathcal{D}_x \times \mathbb{R}_{\geq 0}$ in~\eqref{eq:level_set}. This ensures that the closed-loop system operates inside the region of validity of the DAR model~\eqref{eq:system_dar}. Thus, the region of attraction estimation~$\mathcal{R}$ can be maximized by enlarging the region $\mathcal{R}_0$, which can be viewed as a projection of $\mathcal{R}$ onto $\mathcal{D}_x \subset \mathbb{R}^n$.

To simultaneously deal with these two objectives, we consider the $\epsilon$-constraint optimization algorithm, resulting in the objective problem (OP) stated below, that is solved for a given value of $\varepsilon > 0$.
    \begin{align}
        \min_{\tilde Q_x, \tilde Q_e, \tilde Q_{\pi}, \tilde Q_{\delta}, X, \tilde K_{\mathbf{j}}, \tilde L_{\mathbf{j}}} &\sigma \label{eq:optimization_problem}\\
        \textnormal{subject to} \quad &\operatorname{tr}(\tilde Q_x + \tilde Q_e) + \operatorname{tr}(\tilde Q_{\pi} + \tilde Q_{\delta}) < \epsilon, \\
        &\begin{bmatrix} \label{eq:constraint_enlarge_region}
            \sigma I & I \\
            \ast & X
        \end{bmatrix} \geq 0, \\
        &\text{LMIs in}~  \eqref{eq:attraction_region_lmi_theorem1},\eqref{eq:lmi_sum_theorem1}. \nonumber
    \end{align}
    Applying Schur complement and change the variable \mbox{$P=X^{-1}$} to the constraint~\eqref{eq:constraint_enlarge_region}, it follows that \mbox{$P < \sigma I$}.
    Thus, minimizing $\sigma$ increase the region $\mathcal{R}_0$ since \linebreak $V(x) \leq \sigma x^\top x$.

    We state below the optimization problem as a consequence of Corollary~\ref{cor:pi_independent_lmis} corresponding to the case where the control law~\eqref{eq:control_law_simplified} and the trigger function~\eqref{eq:trigger_function_simplified} without the nonlinearity dependence are employed.
    \begin{equation}
        \begin{aligned}
            \min_{\tilde Q_x, \tilde Q_e, X, \tilde K_{\mathbf{j}}} &\sigma \label{eq:optimization_problem_2}\\
        \textnormal{subject to} \quad &\operatorname{tr}(\tilde Q_x + \tilde Q_e) < \epsilon, \\
        &\text{LMIs in}~\eqref{eq:attraction_region_lmi_theorem1}, \eqref{eq:lmi_sum_theorem1_corollary},
        \eqref{eq:constraint_enlarge_region}.
        \end{aligned}
    \end{equation}

By solving the OP~1 in~\eqref{eq:optimization_problem} and the OP~2 in~\eqref{eq:optimization_problem_2} for different values of $\epsilon$, we can obtain different values of $\sigma$. Thus, the pairs $(\epsilon,\sigma)$ obtained from the solutions of those OPs constitute an estimate of the Pareto front.

\section{Numerical results} \label{sec:numerial_results}

In this section, an example is presented to show the effectiveness of the proposed dynamic ETC scheme of discrete-time nonlinear systems in the DAR form. 
Consider the following discretized nonlinear system:
\begin{equation} \label{eq:system_example1}
\begin{aligned}
    x_{(1)k+1} =& x_{(1)k} + T x_{(2)k} \\
    x_{(2)k+1} =& x_{(2)k} + T x_{(1)k} + T x_{(1)k}^3 \\
    &+ 2 T x_{(2)k} + 8 T x_{(2)k}^3 + T u_k,
\end{aligned}
\end{equation}
where $T = 0.1$~s is the sampling period. 

The system~\eqref{eq:system_example1} can be described in the form of a DAR as~\eqref{eq:system_dar} with
\begin{equation} \label{eq:dar_example1}
    \begin{array}{l}
    A_1{=}\begin{bmatrix}1 & T \\ T & 1+2T\end{bmatrix},
    A_2(x){=}\begin{bmatrix}0 & 0 \\ T x_{(1)} & 8T x_{(2)}\end{bmatrix},  A_3{=}\begin{bmatrix}0 \\ T \end{bmatrix}, \\
    \Omega_1(x){=}\begin{bmatrix}x_{(1)} & 0 \\ 0 & x_{(2)}\end{bmatrix},
    \Omega_2{=}\begin{bmatrix}-1 & 0 \\ 0 & -1\end{bmatrix},  
    \Omega_3{=}\begin{bmatrix}0 \\ 0\end{bmatrix} ,\end{array}
\end{equation}
and the vector of nonlinearities 
$\pi = \begin{bmatrix}
    x_{(1)}^2 & x_{(2)}^2
\end{bmatrix}^\top$. It is assumed that the system trajectories are constrained by \mbox{$\mathcal{D}_x = \left\{x \in \mathbb{R}^2:\left|x_{(i)}\right| \leq \overline{x}, i  = 1,2\right\}$}.

The solutions in Fig.~\ref{fig:epsilon_sigma_pareto}, represented by the blue dashed line, were obtained by solving the optimization problem in OP~1 in~\eqref{eq:optimization_problem} with $\overline{x} = 2$ for a set of $20$ values of $\epsilon$ spaced on a logarithmic scale between $10^{-1}$ and $10^5$. For the considered region $\mathcal{D}_x$ with $\overline{x} = 2$, no feasible solutions were obtained from the optimization problem in OP~2 in~\eqref{eq:optimization_problem_2}.
The results show a trade-off between the two objectives, because a larger region of attraction requires a smaller $\sigma$, which forces the system to trigger more events. Then, by exploiting the solutions in the estimated Pareto front, a designer can choose to expand the estimated region of attraction or reduce the number of transmissions. 
\begin{figure} [ht!]
    \centering
    \includegraphics[width=1.0\linewidth]{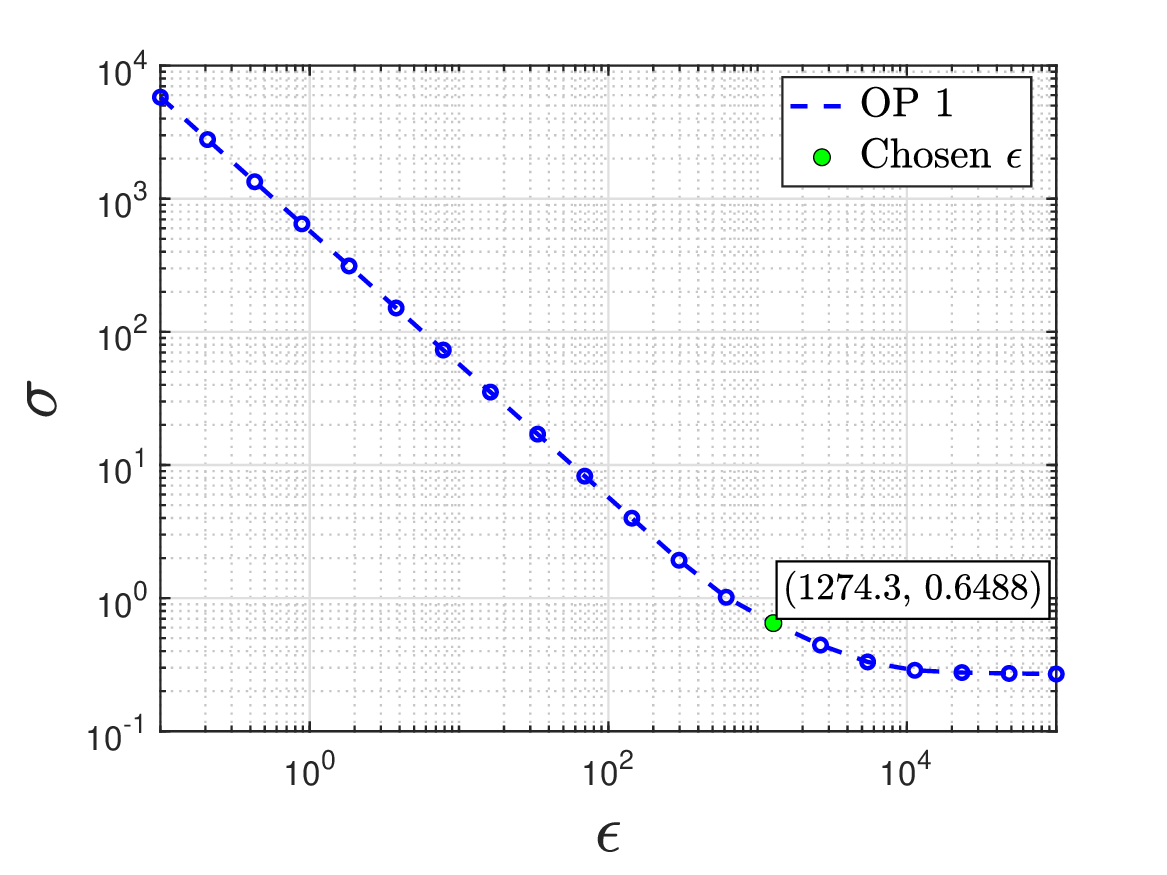}
    \caption{The Pareto optimal solutions found via OP~1 in~\eqref{eq:optimization_problem} in blue. The green marker indicates the point on the Pareto front corresponding to the chosen value of $\epsilon$.}
    \label{fig:epsilon_sigma_pareto}
\end{figure}

To evaluate the conservativeness of OP~2 in~\eqref{eq:optimization_problem_2}, we searched for the maximum value of $\overline{x}$ for which OP~2 was feasible.
The maximum value of $\overline{x}$ region of validity found in which OP~2 in~\eqref{eq:optimization_problem_2} is feasible was $\overline{x}= 1.53$, leading to $\tilde{\mathcal{D}}_x = \left\{x \in \mathbb{R}^2:\left|x_{(i)}\right| \leq 1.53, i  = 1,2\right\}$.
In Fig.~\ref{fig:region_of_sttraction_cor2_cor3}, we show the largest estimated regions of attraction obtained considering $\tilde{\mathcal{D}}_x$ to OP~1 in~\eqref{eq:optimization_problem} and OP~2 in~\eqref{eq:optimization_problem_2} solved with $\epsilon = 10^3$.
This clearly indicates that the inclusion of the nonlinear term $\pi$ in the control law reduces the conservativeness in the proposed approach, as a larger region of attraction estimate was obtained with $\overline{x} = 1.53$, and only OP~1 resulted in feasible solutions with $\overline{x} = 2$.
\begin{figure}[!ht]
    \centering
    \includegraphics[width=8cm]{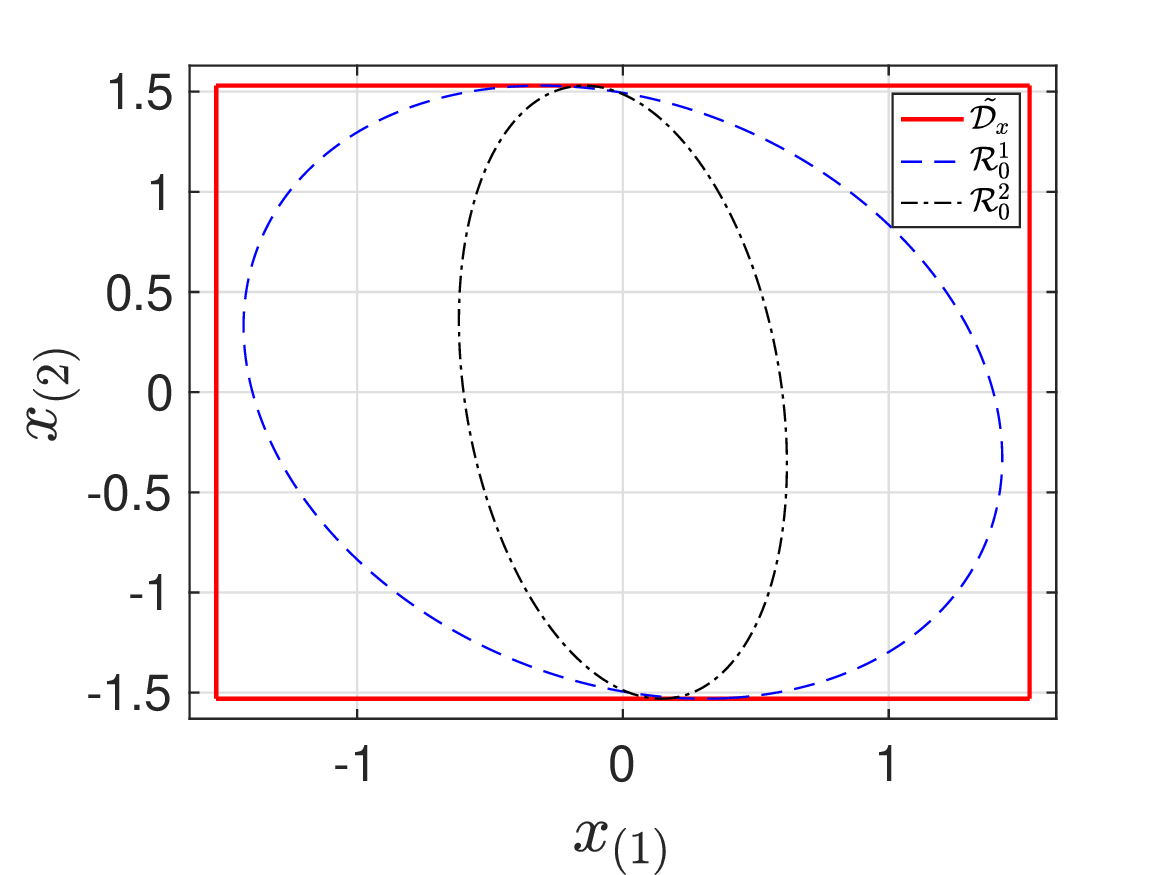}
    \caption{Estimated regions of attraction. Validity region $\tilde{\mathcal{D}}_x$ is depicted as a red solid square. The blue dashed ellipse $\mathcal{R}_0^1$ corresponds to the estimated region for OP~1 in~\eqref{eq:optimization_problem} while the black dash-dotted ellipse $\mathcal{R}_0^2$ corresponds to the estimated region for OP~2 in~\eqref{eq:optimization_problem_2}.}
    \label{fig:region_of_sttraction_cor2_cor3}
\end{figure} 

The estimation of the region of attraction obtained with $\epsilon = 1274.3$, considering the region $\mathcal{D}_x$ is shown in Fig.~\ref{fig:region_of_sttraction_estimation} together with several closed-loop trajectories with initial conditions $x_0$ taken at the border of the region of admissible initial conditions~$\mathcal{R}_0$ and $\eta_0 = 0$. Notice that all trajectories asymptotically converge to the origin $x = 0$.
\begin{figure}[!ht]
    \centering
    \includegraphics[width=8cm]{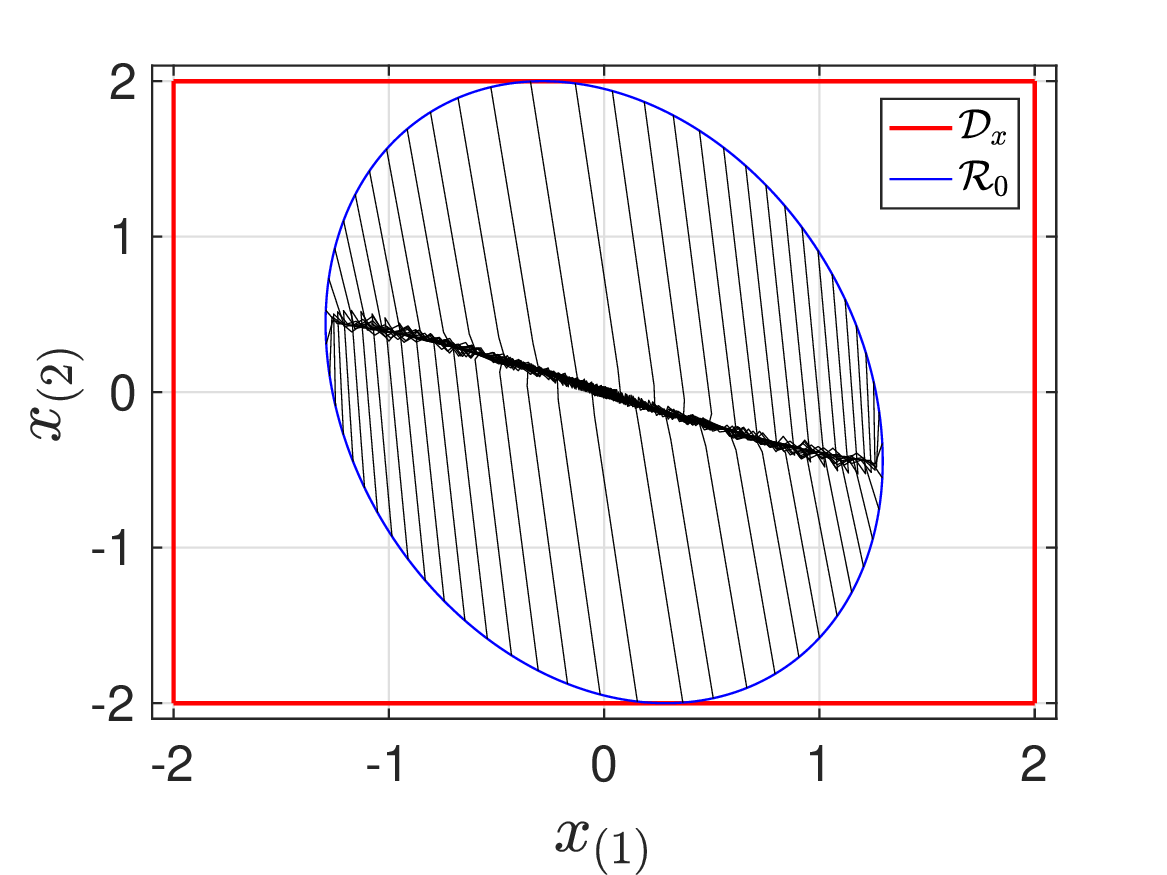}
    \caption{The validity region $\mathcal{D}_x$ and the estimated region of attraction $\mathcal{R}_0$ for OP~1 in~\eqref{eq:optimization_problem}.}
    \label{fig:region_of_sttraction_estimation}
\end{figure} 

We also compare the static and dynamic ETC schemes using the same selected $\epsilon = 1274.3$ and $\overline{x} = 2$, for which only OP~1 is feasible. Table~\ref{tab:exemple02_number_of_events} summarizes the average number of events for both the static and dynamic ETC schemes designed via OP~1 in~\eqref{eq:optimization_problem}. The dynamic ETC scheme was evaluated with different values of $\lambda$ and $\theta$. We simulated the closed-loop system with 60 different initial conditions within the estimated region of attraction. It can be observed that the proposed dynamic ETC scheme effectively reduces the number of events. 
\begin{table}[!ht]
\centering
\caption{Average number of events 
considering
$10$ seconds interval. 
}
\footnotesize
\begin{tabular}{lclll}
\hline Static ETC & 34.79 & & & \\
\hline Dynamic ETC & $\lambda=10^{-3}$ & $\lambda=10^{-2}$ & $\lambda=10^{-1}$\\
\hline 
$\theta=2$ & 34.31 & 34.38 & 34.51\\
$\theta=10$ & 32.25 & 32.31 & 34.67 \\
$\theta=10^2$ & $\mathbf{32.10}$ & 32.41 & 34.78\\
$\theta=10^3$ & 32.75 & 33.03 & 34.63\\
\hline  
\end{tabular}
\label{tab:exemple02_number_of_events}
\end{table}

The time-series of a state trajectory of the closed-loop system, the sampled control input signal, and the inter-event transmission times are depicted in Fig.~\ref{fig:result_states_events_pi_exemple02} for an initial condition set $x_0 = (0.6573,1.4554)$, and parameter $\lambda = 10^{-3}$ and $\theta = 100$. In this simulation, 30 events were transmitted with the dynamic ETC scheme. Using the static version, 35 events are transmitted, and a standard periodic time-triggered scheme transmits 101 events. This illustrates that the proposed dynamic ETC scheme can save communication resources compared to its static counterpart and a standard periodic time-triggered scheduler.
\begin{figure}[!ht]
    \centering
    \includegraphics[width=\columnwidth]{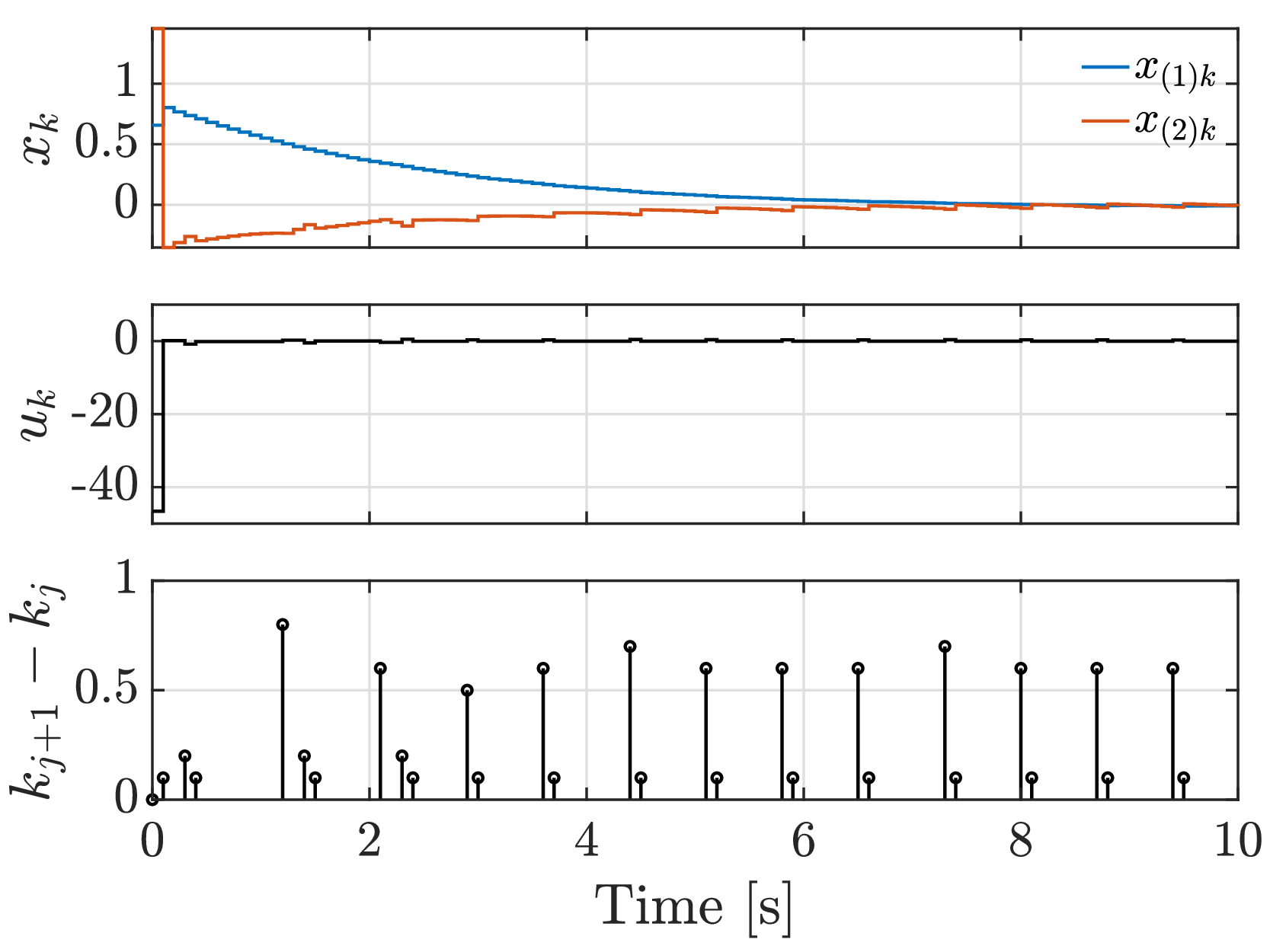}
    \caption{Trajectory of states in a closed-loop with dynamic ETM.
    }
    \label{fig:result_states_events_pi_exemple02}
\end{figure} 

\section{Conclusion} \label{sec:conclusion}

In this paper, a dynamic ETC scheme for discrete-time nonlinear systems based on DAR was addressed. The proposed co-design condition ensured the asymptotic stability of the origin of the closed-loop system. 
The co-design of the ETC scheme with a nonlinear gain-scheduled controller was established through a Lyapunov-based analysis. 
Furthermore, an optimization problem was employed to reduce the number of transmissions and enlarge the region of attraction estimation. 
The numerical example validated the proposed methodology, revealing that the nonlinear gain-scheduled controller provided a less conservative condition than both a standard gain-scheduled controller and a linear one.
Finally, as expected, the dynamic ETC scheme outperformed the static approach by reducing the occurrence of triggering events.

\bibliography{ifacconf}             
                                                   







\end{document}